\newcommand{\playerSet}{\mathcal{V}}
\newtheorem{theorem}{Theorem}
\title{Self-regulation promotes cooperation in social networks}
\author
{Dario Madeo,$^{1}$ Chiara Mocenni$^{1\ast}$\\
\\
\normalsize{$^{1}$Department of Information Engineering and Mathematics, University of Siena,}\\
\normalsize{Via Roma 56, 53100, Siena, Italy.}\\
\\
\normalsize{$^\ast$To whom correspondence should be addressed; E-mail:  mocenni@dii.unisi.it.}
}
\date{}
\begin{document} 




\maketitle


\begin{abstract}
Cooperative behavior in real social dilemmas is often perceived 
as a phenomenon emerging from norms and punishment.
To overcome this paradigm,
we highlight the interplay between the influence of social networks on individuals,
and the activation of spontaneous self-regulating mechanisms,
which may lead them to behave cooperatively,
while interacting with others and taking conflicting decisions over time. 
By extending Evolutionary game theory over networks, we prove that cooperation partially or fully emerges whether self-regulating mechanisms
are sufficiently stronger than social pressure.
Interestingly, even few cooperative individuals
act as catalyzing agents for the cooperation of others,
thus activating a recruiting mechanism, eventually driving the whole population 
to cooperate.\\
\end{abstract}



Cooperation in human populations is a fundamental phenomenon,
which has fascinated many scientists working in different fields, such as biology, sociology, economics, 
\cite{Hammerstein2003, Nowak2004SCIENCE,DOEBELI2004,Gintis2005, Pennisi2009,RandNowak2013},
and, more recently, engineering \cite{CAO2013,TOUPO2015,LEONARD2018}.
Actually, in the analysis of social dilemmas, such as the prisoner's dilemma game, cooperation is often assumed to be 
the result of suitable norms \cite{FEHR2004,HAUERT2007} and punishment mechanisms 
\cite{FEHR2002, SIGMUND2003, BOYD2010, HELBING2010, Nowak2011, STANLEY2017}.
Further approaches claim that cooperation emerges from other factors, such as 
reputation \cite{Nowak2006_2, WANG2017},
synergy and discounting \cite{CH2006},
social diversity \cite{SANTOS2008} and positive interactions \cite{Nowak2009PublicCooperation}.\\
Mathematical models tackling the problem of cooperation are typically grounded on evolutionary game theory, 
where a population of individuals taking part in a game repeated over time is considered.
Each individual decides to be Cooperative ($C$) or Defective ($D$)
when playing one or more two-players games with other individuals.
Eventually he obtains a given payoff,
and may decide to opt for changing his strategy.
In a generic two-players game, $R$ is the reward when both cooperate, $T$ is the temptation
to defect when the opponent cooperates, $S$ is the sucker's payoff earned by
a cooperative player when the opponent is a free rider, and $P < R$ is the punishment
for mutual defection.
The social dilemma arises when the temptation to defect is stronger than the reward for cooperation ($T > R$),
and the punishment for defection is preferred to the sucker's payoff ($P > S$).
This scheme is known as prisoner's dilemma game, and it well describes
many real world situations, occuring in fields like 
environmental protection, biology, psychology, international politics, economics and so on.
In all these cases,
defectors earn a higher payoff than unconditional
cooperators. As a consequence, cooperators
are doomed to extinction. This fact has been proven to be true 
for both the deterministic setting of the replicator equation and for stochastic
game dynamics of finite populations, assuming that
players are equally likely to interact with each other \cite{HOF1998, Nowak2004, Nowak2006}.
Interestingly, a consistent part of the recent literature
\cite{SANTOS2005, Nowak2005, SANTOS2006, NOWAK2006_REGRAPH, PIN2015, Adami2016, Nowak2017, GG2012} 
showed that the networked structure of the population may favor the emergence of cooperation.\\

In this work, we consider members of a social network.
To be realistic, 
these individuals may choose to be $100\%$ cooperative, $100\%$ defective, as in the standard case,
or also fuzzy strategies, such as $50\%-50\%$ cooperative/defective.
More interestingly, in this study cooperation is shown to be promoted by spontaneous self-regulating mechanisms,
according to the idea that humans seem to have an innate tendency to
cooperate with one another even when it
goes against their rational self-interest, as pointed out by 
Vogel in \cite{VOGEL2004}.
Specifically, self-regulation is modeled as an inertial term and the resulting effects on the population dynamics
are extensively analyzed both theoretically and by means of numerical simulations.\\

 
Consider a social network defined as a finite population of players $v = \{1, \ldots, N\}$, 
arranged on a directed graph described by adjacency matrix ${\bf A} = \{a_{v,w}\}$,
where $a_{v,w} = 1$ if player $v$ is influenced by player $w$, and $0$ otherwise.
The in-degree $k_{v} = \sum_{w=1}^{N}a_{v,w}$ of a player represents the number 
of neighbors of the generic player $v$. 
At each time instant a player can 
choose his own level of cooperation, indicated by $x_{v} \in [0,1]$,
to play $k_{v}$ two-persons games
with all his neighbors.
Notice that $x_{v} = 1$ represents 
full cooperation ($C$), 
$x_{v} = 0$ means full defection ($D$),
and $x_{v} \in (0, 1)$ stands for a partial level of cooperation.
Accordingly, when any two connected players $v$ and $w$ take part in a game,
the payoff for $v$ is defined by the continuous function $\phi : [0, 1] \times [0, 1] \to \mathbb{R}$:
\begin{equation}\label{eqn:phiTwoPlayers}	
\phi(x_{v}, x_{w}) = [x_{v}, 1-x_{v}]\begin{bmatrix}
R & S\\
T & P
\end{bmatrix} \begin{bmatrix}
x_{w} \\
1-x_{w}
\end{bmatrix}.
\end{equation}
Notice that for $x_v, x_w \in \{0,1\}$, we recover the payoffs $R$, $T$, $S$ and $P$ previously introduced.
Player's $v$ payoff $\phi_{v}$ collected over the network is the sum of all outcomes of two-player games with neighbors.
Formally, the payoff function $\phi_{v} : [0, 1]^{N} \to \mathbb{R}$ is defined as follows:
\begin{equation}\label{eqn:phiNetwork}
	\phi_{v}({\bf x}) = \sum_{w = 1}^{N} a_{v,w}\phi(x_{v}, x_{w}),
\end{equation}
where ${\bf x} = [x_{1}, x_{2}, \ldots, x_{N}]^{\top}$.
Player $v$ is able to appraise whether a change of his strategy $x_{v}$
produces an improvement of his payoff $\phi_{v}$. 
Indeed, if the derivative
of $\phi_{v}$ with respect to $x_{v}$ is positive (negative), 
the player would like to increase (decrease) his level of cooperation.
Of course, when the derivative is null, then the player has no incentive
to change his mind.
This mechanism is modeled by the EGN equation \cite{MM2015,IMM2016}, which reads as follows:
\begin{equation}\label{eqn:EGN-base}
\dot{x}_{v} = x_{v}(1-x_{v})\displaystyle \frac{\partial \phi_{v}}{\partial x_{v}},
\end{equation}
where the sign of $\dot{x}_{v}$ depends only on the term $\partial \phi_{v}/\partial x_{v}$,
since $x_{v}(1-x_{v}) \geq 0$.
Unlike the standard replicator equation, which
deals with the distribution of strategies over a well mixed population,
where players are indistinguishable except for their strategies, 
the EGN is a system of ODEs, each one describing the
strategy evolution of the specific player $v$
(when dealing with more than two strategies, the dimension of the ODE system increases accordingly).
%
Following equation \eqref{eqn:phiNetwork}, the derivative of the payoff $\phi_{v}$ is:
$$\frac{\partial \phi_{v}}{\partial x_{v}} = \sum_{w=1}^{N}a_{v,w} \frac{\partial \phi(x_{v},x_{w})}{\partial x_{v}}.$$
Moreover, from equation \eqref{eqn:phiTwoPlayers} we get that:
\begin{equation}\label{eqn:phiDerivative}
	\frac{\partial \phi(x_{v},x_{w})}{\partial x_{v}} = (R-T+P-S)x_{w} - (P-S) = (\sigma_{C}+\sigma_{D})x_{w} - \sigma_{D},
\end{equation}
where $\sigma_{C} = R-T$ and $\sigma_{D} = P-S$ \cite{WEIBULL1995}.
For the specific case of a strict prisoner's dilemma game, $\sigma_{C} < 0$ and $\sigma_{D} > 0$. 
According to \cite{SANTOS2006} and \cite{Wang2015},  
unilateral defection is preferred to mutual cooperation when $\sigma_{C}<0$,
while $\sigma_{D} > 0$ indicates the preference for mutual defection
over unilateral cooperation.
When the effect of $\sigma_{C}$ is stronger than $\sigma_{D}$ ($|\sigma_{C}| > \sigma_{D}$),
the game is more influenced by temptation (hereafter called {\it T-driven} game),
while in the other case ($|\sigma_{C}| < \sigma_{D}$)
punishment is more effective, and hence the game will be called {\it P-driven}.
Since $\sigma_{C} < 0$ and $\sigma_{D} > 0$, we have that $\partial \phi(x_{v},x_{w})/\partial x_{v} \leq 0$
for all $x_{w} \in [0, 1]$. Therefore, also $\partial \phi_{v}/\partial x_{v} \leq 0$
and then $\dot{x}_v \leq 0,$
showing that the level of cooperation decreases over time towards full defection.\\

As reported in equation \eqref{eqn:EGN-base}, $\partial \phi_{v}/\partial x_{v}$ depends only on the state of neighboring players,
not on the current state $x_v$ of player $v$ himself.
On the contrary, the willingness to pursue cooperation as a greater good follows from  
internal mechanisms correlated to personal awareness and culture. 
These mechanisms should act as inertial terms able to reduce the rational temptation to defect,
depending on the current strategy $x_{v}$ of player himself. Generally, this aspect
is not taken into account in the mathematical modeling, even though it normally characterizes complex 
individuals like humans \cite{VOGEL2004}.\\

\begin{figure}[!ht]
\begin{center}
\includegraphics[width=0.8\columnwidth]{./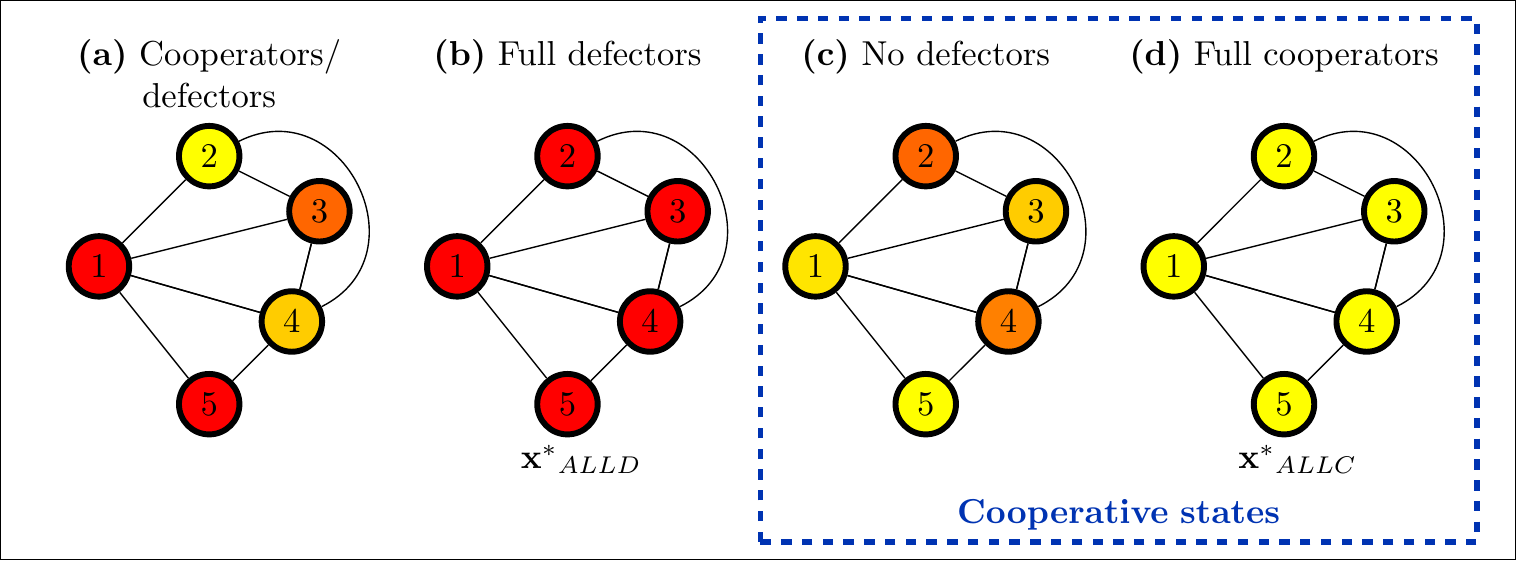}
\caption{\footnotesize{\textbf{Cooperation distribution over a social network.} 
\textbf{(a)} Coexistence of full cooperators (yellow circles), partial cooperators (orange circles) and 
full defectors (red circles).
\textbf{(b)} Population composed only by full defectors. This state is
the $N$-dimensional vector ${\bf x}^{*}_{ALLD} = [0, 0, \ldots, 0].$
\textbf{(c)} No full defectors are present within the population.
\textbf{(d)} Population composed only by full cooperators. This state is
the $N$-dimensional vector ${\bf x}^{*}_{ALLC} = [1, 1, \ldots, 1].$
Configurations reported in \textbf{(c)} and \textbf{(d)} represent
cooperative steady states (i.e. $x_v > 0 ~\forall v$).
}}
\label{fig:configurations}
\end{center}
\end{figure}

To fill this gap, in this paper internal mechanisms are introduced in the EGN equation by adding a term $f_{v}$
balancing $\partial \phi_{v}/\partial x_{v}$.
This term is weighted by a parameter $\beta_{v}$ which measures
the inertia of a player with respect to his neighbors' actions \cite{LEONARD2018}.
The extended Self-Regulated EGN equation, hereafter called SR-EGN,
is reported in Figure \ref{fig:eqnBox}. 
Notice that full cooperative ${\bf x}_{ALLC}^{*} = [1, 1, \ldots, 1]^{\top}$,
full defective ${\bf x}_{ALLD}^{*} = [0, 0, \ldots, 0]^{\top}$, and
fuzzy configurations ${\bf x}^{*} = [x_{1}^{*}, x_{2}^{*}, \ldots, x_{N}^{*}]^{\top}$
with $x_{v}^{*} > 0$ for at least one $v$,
are steady states of the SR-EGN equation (see Figure \ref{fig:configurations}).  
Further details on steady states are reported in Appendix \ref{sec:EGNmodel_app}. 



\begin{figure}[!ht]
\begin{center}
\includegraphics[width=0.8\columnwidth]{./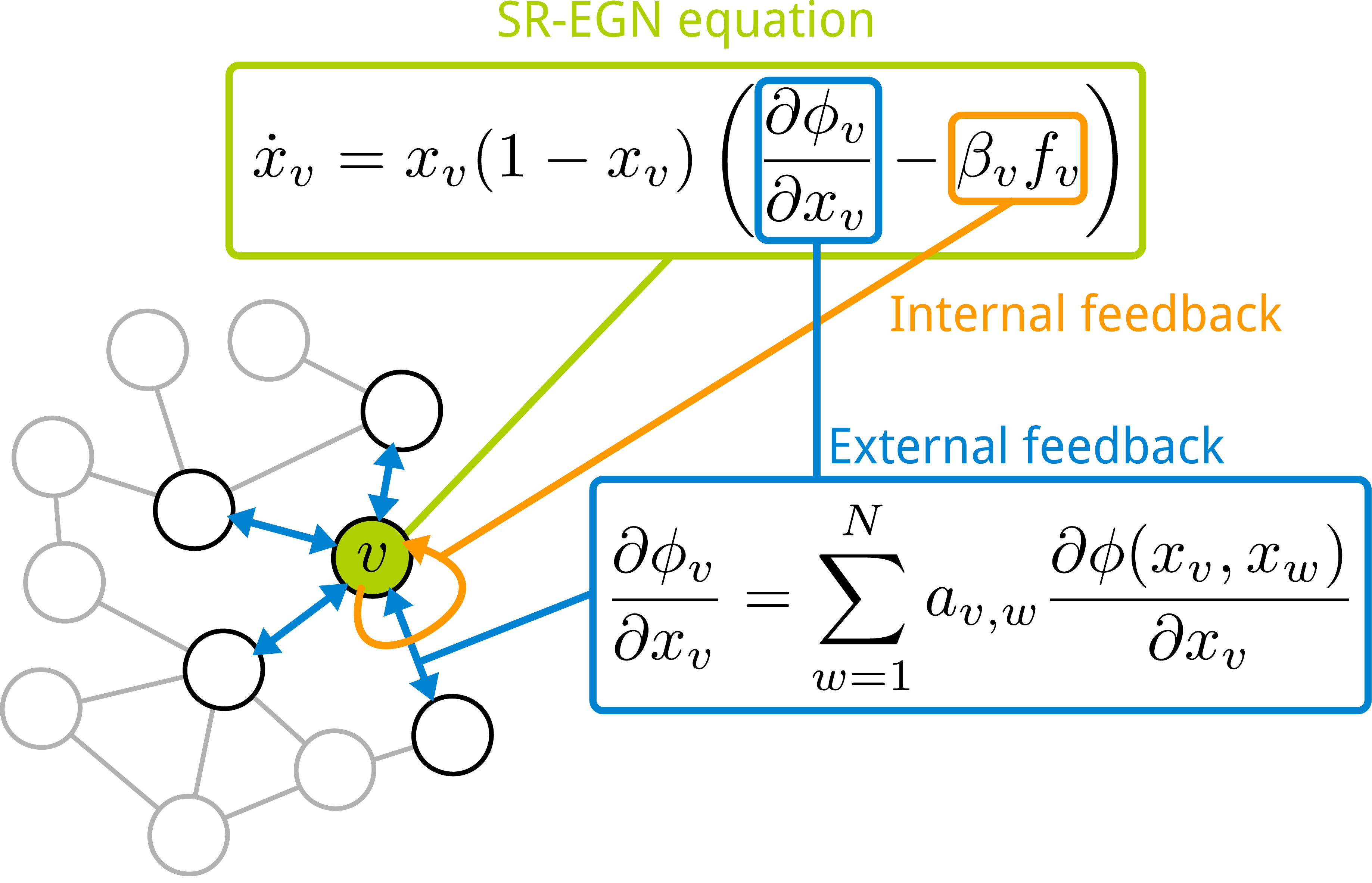}
\caption{\footnotesize{\textbf{SR-EGN equation.} 
The strategy dynamics of player $v$ (green node) is
ruled by the SR-EGN equation (green box). 
It includes two terms: the external feedback term (blue box) 
$\partial \phi_{v}/\partial x_{v}$, accounting for the 
external mechanisms related to game interactions (blue arrows) 
with the $k_{v}$ neighbors (black nodes),
and the internal feedback term (orange box)
$\beta_{v}f_{v}$, modeling the self regulating processes
of node $v$ (the orange self loop). }}\label{fig:eqnBox}
\end{center}
\end{figure}

%

%

Inspired by self-regulation in animal societies \cite{S1981},
the term $\beta_{v}f_{v}$ is modeled as an internal feedback describing a
virtual game that each individual plays against himself, i.e. a self game. 
This game is characterized by the same parameters $\sigma_{C}$ and $\sigma_{D}$ of the two-player games.
Therefore, the self-regulating function $f(v)$ is written as:
\begin{equation}\label{eqn:fgame}
f_{v} = (\sigma_{C} + \sigma_{D})x_{v} - \sigma_{D}.	
\end{equation}
Notice that $f_{v}$ is similar to
equation \eqref{eqn:phiDerivative}, where $x_{w}$ has been replaced by $x_{v}$,
thus conceiving the individual $v$ himself as one of his own ``opponents''.
``What kind of 
outcome can I earn if I apply a given strategy to myself?'':
a generic player in our model can be 
aware of the conflicting context where he participates,
and he may know the importance of cooperation
as a primal objective to be pursued. Remarkably, this term models a spontaneous 
learning process, thus representing a time varying feature of each individual.\\

The complete SR-EGN equation studied in this paper can be rewritten as follows:
\begin{equation}\label{eqn:SREGN2}
\dot{x}_v = \displaystyle x_v(1-x_v)\left[k_v\left((\sigma_C+\sigma_D)\bar{x}_v-\sigma_D\right)-\beta_v\left((\sigma_C+\sigma_D)x_v-\sigma_D\right)\right],
\end{equation}
where $\bar{x}_v = (1/k_v)\sum_{w=1}^N a_{v,w}x_w$ is the average player
resulting from the decisions of all neighboring players of $v$.
Besides $\sigma_C$ and $\sigma_D$, the two fundamental parameters of this model are $k_v$ and $\beta_v$.
$k_{v}$ is the in-degree of player $v$, thus accounting for the influence of the network
(external feedback) on his decision.
The second parameter is the weighting factor $\beta_{v}$ 
modulating self-regulation. When $\beta_{v} = 0$, the individual is somehow ``member of the flock'', 
since his strategy changes only according to the outcome variations of game
interactions with neighbors, embodied by $\partial\phi_{v}/\partial x_{v}$. 
In this case, we recover the standard EGN equation 
\eqref{eqn:EGN-base}, and defection is unavoidable.
Positive values of $\beta_{v}$ represent
an ``aware resistance'' of players to the external feedback.\\

\begin{figure}[!ht] 
\begin{center}
\includegraphics[width=\columnwidth]{./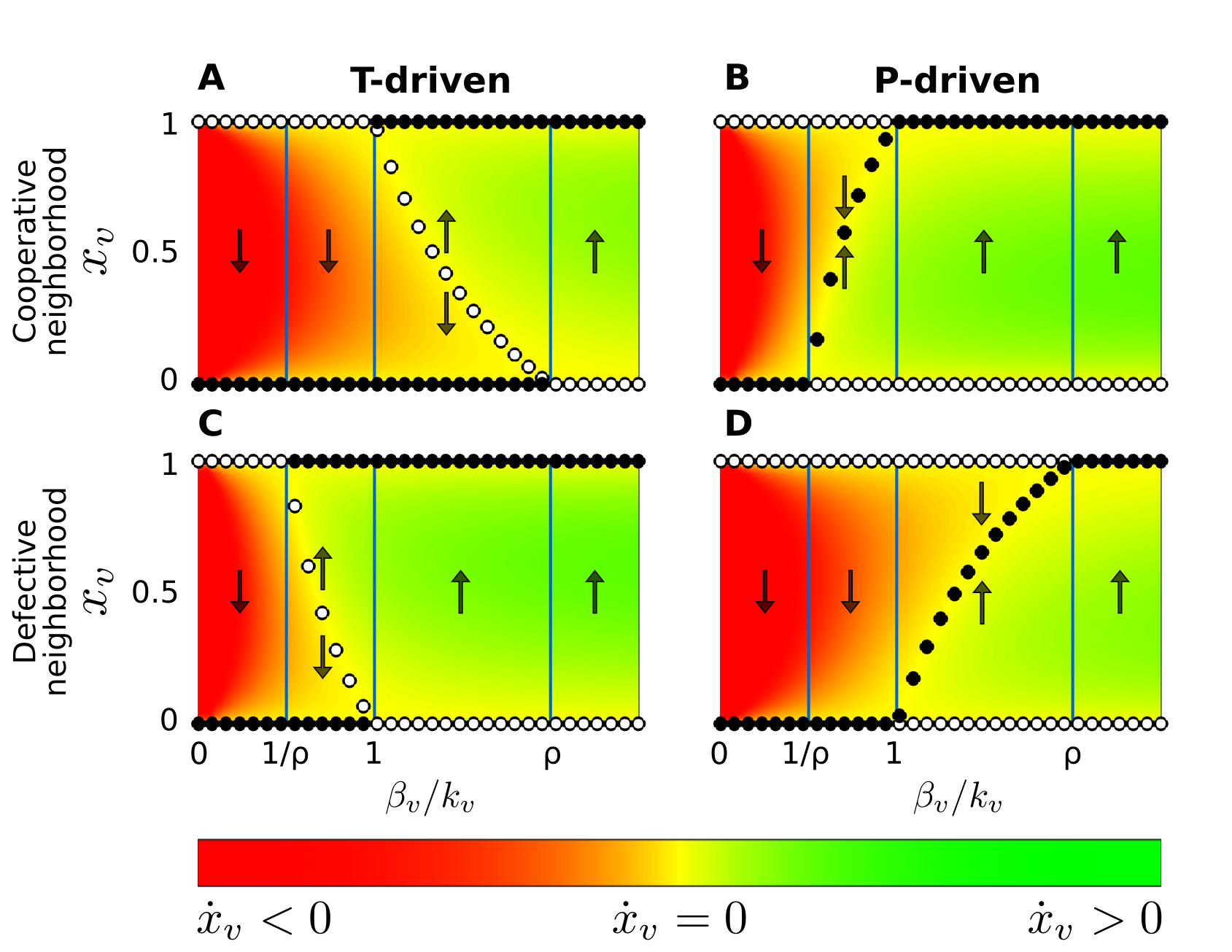}
\caption{\footnotesize{\textbf{Flow of the SR-EGN equation.} 
Colors represent the value of the derivative $\dot{x}_{v}$ as a function of $x_{v}$ and $\beta_{v} / k_{v}$.
Black and white circles correspond to attractive and repulsive steady states for player $v$, respectively.
Arrows are used to schematically highlight the direction of the dynamics.
The flow for a player $v$ connected only to full cooperators ($\bar{x}_{v} = 1$) 
in {\it T-driven} game ($\sigma_{C} = -2$ and $\sigma_{D} = 1$)
and in {\it P-driven} game ($\sigma_{C} = -1$ and $\sigma_{D} = 2$)
are reported in subplots {\bf A} and {\bf B}, respectively.  
Instead, the case of a player connected only to full defectors
($\bar{x}_{v} = 0$) is shown in subplots {\bf C} ({\it T-driven}) and {\bf D} ({\it P-driven}).
Vertical blue lines are depicted for $\beta_{v}/k_{v} = 1/\rho$, $\beta_{v}/k_{v} = 1$ and $\beta_{v}/k_{v} = \rho$.{}
%
The intermediate steady states corresponding to partial levels of cooperation
are always located between $1/\rho$ and $\rho$.{}
%
}}\label{fig:flux}
\end{center}
\end{figure}

The main result of this paper is that SR-EGN equation
can explain the emergence of cooperation in a social network.
More specifically,  
when the awareness is stronger than the level of connectivity of each player, i.e. $\beta_{v} > k_{v}$, then
state ${\bf x^{*}}_{ALLC}$ 
is an attractor for the dynamics of the population, as well as a Nash equilibrium \cite{MM2015} of the complete game,
while at the same time total defection ${\bf x^{*}}_{ALLD}$ is repulsive. 
Furthermore, when
\begin{equation}\label{eqn:theorem}
	\beta_{v} > \rho k_{v},
\end{equation} 
where $\rho = \left|\sigma_{C}/\sigma_{D}\right| \geq 1$ for a {\it T-driven} game
and $\rho = \left|\sigma_{D}/\sigma_{C}\right| \geq 1$ for a {\it P-driven} game,
then ${\bf x^{*}}_{ALLC}$ is a global attractor. 
In other words, starting from any initial strategy $x_{v}(t=0) > 0 ~\forall v$,
all players will eventually become full cooperators.
These results are formally proved in the Appendix \ref{sec:survival}.
Notice that convincing individuals with high degree $k_{v}$ to be cooperative requires potentially
large values of $\beta_v$. Anyway, in the following we show that
cooperation may be also achieved for smaller value of $\beta_{v}$.\\

\begin{figure}[!ht]
\begin{center}
\includegraphics[width=\columnwidth]{./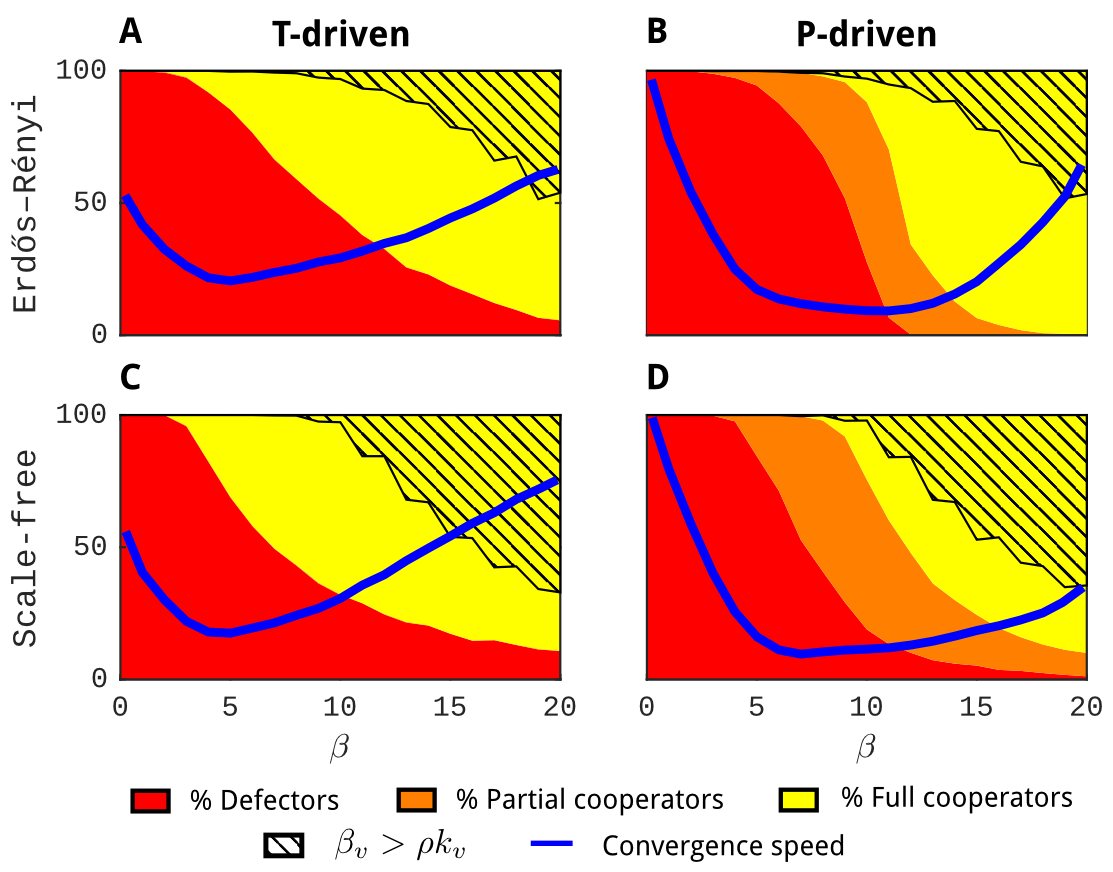}
\caption{\footnotesize{\textbf{Average distribution of strategies and convergence speed.} 
For different graph topologies (ER and SF), 
$500$ graphs with $N = 100$ nodes and average degree $\bar{k} = 10$ are considered.
Initial conditions are randomly chosen in the set $(0, 1)$.
For a {\it T-driven} game ($\sigma_{C} = -2$ and $\sigma_{D} = 1$) and a {\it P-driven} game  
($\sigma_{C} = -1$ and $\sigma_{D} = 2$), and for different values of the parameter $\beta_{v} = \beta \in \{0, \ldots, 20\} ~\forall v$,
the SR-EGN equation is simulated system until a steady state is reached.
The average distribution of strategies of the whole
population, dividing it into 3 subgroups, is reported: defectors in red ($x_{v}^{*} < 10^{-4}$),
partial cooperators in orange ($10^{-4} \leq x_{v}^{*} \leq 1-10^{-4}$) and full cooperators in yellow ($x_{v}^* > 1 - 10^{-4}$). 
The hatched area represent the average percentage of population
for which the theoretical rule $\beta_{v} > \rho k_{v}$ is satisfied. 
The superposed blue lines represent the convergence speed, experimentally estimated as
the inverse of the time required to the population for reaching the steady state.}}\label{fig:avgPerc}
\end{center}
\end{figure}

In Figure \ref{fig:flux} the value of $\dot{x}_{v}$ is reported with colors
as a function of $\beta_{v}/k_{v}$ and $x_{v}$, together with the attracting (black circles)
and repulsive (white circles) steady states for player $v$. Arrows are used to depict the direction
of the dynamics. 
The derivative $\dot{x}_{v}$ is analyzed for both a cooperative ($\bar{x}_{v} = 1$, Figures \ref{fig:flux}{\bf A} and \ref{fig:flux}{\bf B}) 
and a defective ($\bar{x}_{v} = 0$, Figures \ref{fig:flux}{\bf C} and \ref{fig:flux}{\bf D}) neighborhood sets.
The interesting region of all graphs is $\beta_v/k_v < \rho$, in which the values of the parameters are below the 
theoretical threshold \eqref{eqn:theorem}. In all cases, 
the smaller is this ratio, the lower is the probability 
to cooperate, while the higher is the ratio, the higher is the probability to cooperate. 
For intermediate values ($1/\rho < \beta_{v}/k_{v}<\rho$), the formation 
of steady states corresponding to partial levels of cooperation is observed.
These steady states separate the regions where defection or cooperation dominates.
For the {\it T-driven} games (Figures \ref{fig:flux}{\bf A} and \ref{fig:flux}{\bf C}), these new equilibria are 
repulsive, thus creating bistable dynamics which leads player to cooperate or defect
according to their initial conditions.
Moreover, the probability to cooperate raises for increasing values of $\beta_{v}$
or decreasing the in-degree.
Notice that, since the temptation is prominent for this game,
if $\bar{x}_{v}$ moves from $1$ (Figure \ref{fig:flux}{\bf A}) to $0$ (Figure \ref{fig:flux}{\bf C}), 
the region of partial steady states 
exists for lower values of $\beta_{v}/k_{v}$, thus easing cooperation.
On the other hand, 
for {\it P-driven} games (Figures \ref{fig:flux}{\bf B} and \ref{fig:flux}{\bf D}), the fuzzy steady states 
are attractive, thus ensuring at least a certain level of cooperation also in the intermediate region. Interestingly, 
since the punishment is strong, the presence 
of cooperators in the neighborhood facilitates the convergence to a cooperative state. 
Specifically, if $\bar{x}_{v}$ moves from $1$ (Figure \ref{fig:flux}{\bf B}) to $0$ (Figure \ref{fig:flux}{\bf D}),
the region of partial steady states 
exists for higher values of $\beta_{v}/k_{v}$, thus preventing cooperation.
%
This phenomenon will be explained better in the following experiments.
\\


Figure \ref{fig:flux} shows that, from the single player's point of view, cooperation is also 
feasible for values of $\beta_{v}$ below the theoretical threshold \eqref{eqn:theorem}.
%
For completeness, the behavior of the whole population is studied by means of numerical simulations.
We investigate the probability for a population to be cooperative by running $500$ simulations
of different random networks (Erd\"os-R\'enyi (ER) and Scale-Free (SF))
with $N=100$ and average degree $\bar{k}=10$ (and thus the average in-degree is also $10$). 
Moreover, all individuals share the same
self-regulating factor $\beta_v = \beta \in \{0, \ldots, 20\}$. 
This experiment has been repeated for {\it T-driven} and {\it P-driven} games, 
using initial conditions randomly generated for each simulation.

%
In Figure \ref{fig:avgPerc} the percentages of full defectors (red area), 
partial cooperators (orange area) 
and full cooperators (yellow area)
at steady state are reported as a function of $\beta$.
The fraction of individuals who satisfy the theorem \eqref{eqn:theorem} is highlighted with the hatched pattern.
Notice that the number of defectors decreases by increasing $\beta$.
Consistently with the results shown in Figure \ref{fig:flux}, 
bistable behavior is observed for {\it T-driven} game (Figures \ref{fig:avgPerc}{\bf A} and \ref{fig:avgPerc}{\bf C}),
for which the population splits into two groups of full defectors and full cooperators.
Partial cooperation (orange area) is present for {\it P-driven} game (Figures \ref{fig:avgPerc}{\bf B} and \ref{fig:avgPerc}{\bf D}).
According to the results shown in Figures \ref{fig:flux}{\bf B} and \ref{fig:flux}{\bf D},
where the presence of cooperative neighborhood fosters cooperation for
single players, we observe the same phenomenon in Figures \ref{fig:avgPerc}{\bf B} and \ref{fig:avgPerc}{\bf D},
extended to the whole population.
In particular, the presence of even few cooperators is able to recruit
their neighboring players to switch their strategies from defection to cooperation. 
Increasing $\beta$, 
these players, together with those satisfying the threshold \eqref{eqn:theorem}, 
are able to recruit to cooperation an increasing number of individuals.
%
%
%
The average convergence speed of the system to a steady state,
reported by blue lines, shows a slowdown
of the dynamics for intermediate values of $\beta$.
Specifically, this occurs when a small fraction of cooperators appears in the population,
until a sufficiently large number of individuals start to cooperate,
thus accelerating significantly the dynamics. 
\\

\begin{figure}[!ht]
\begin{center}
\includegraphics[width=\columnwidth]{./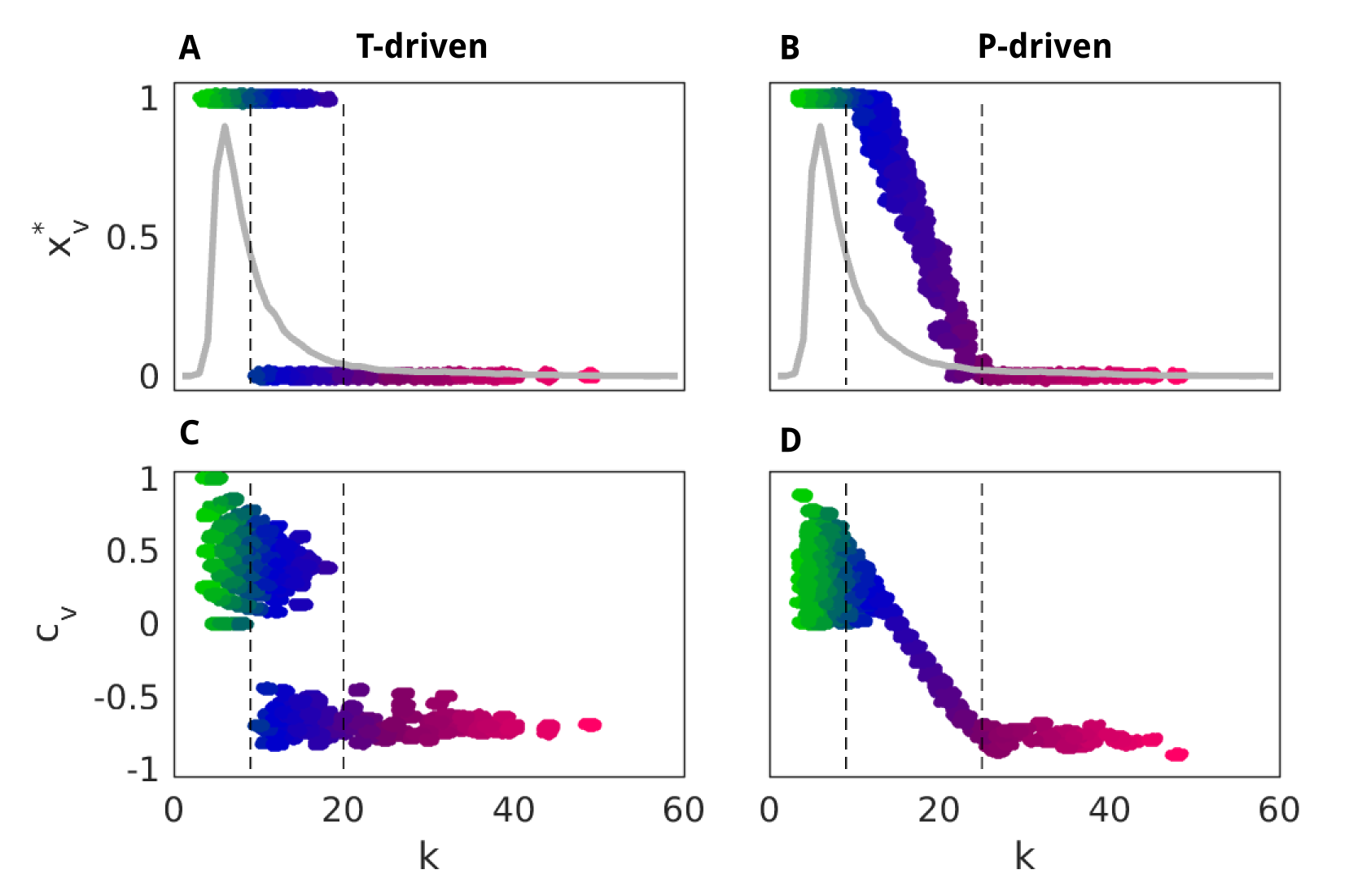}
\caption{\footnotesize{\textbf{Selfishness and altruism within heterogeneous populations.} 
Steady steady state configurations for {\it T-driven} (subplot {\bf A}) and 
{\it P-driven} (subplot {\bf B}) games.
The results have been obtained by simulating the SR-EGN equation
for $500$ realizations of a SF network, composed by $N=100$ individuals,
with average degree $\bar{k} = 10$.
The self regulation parameter is set to $\beta = 15$ for all players. 
The transition from green to blue to magenta indicates 
the social role of players in the graphs, thus allowing to distinguish between 
non central players (green dots), intermediate player (blue dots) and
hubs (magenta dots). 
The vertical dashed lines highlight this distinction. 
The grey lines represent the degree distribution of the networks.
Subplots {\bf C} and {\bf D} report the indicator $c_v$ introduced
in equation \eqref{eq:cv}, for the {\it T-driven} and {\it P-driven} games, respectively.
%
}}\label{fig:deltaCoop}
\end{center}
\end{figure}

Additionally, the relationship between $\beta_{v}$ and $k_{v}$
is highlighted in Figure \ref{fig:deltaCoop} where $\beta$ is set to $15$, 
and SF networks are used.
For each player, a circle represents the reached steady state as a function of his
in-degree (Figures \ref{fig:deltaCoop}{\bf A} and \ref{fig:deltaCoop}{\bf B}).
In the same subplots, the degree distributions of networks
is depicted in gray.
Players with low degree (green dots), representing non-central individuals,
converge towards a cooperative steady state.
On the other hand, hub players (magenta dots), always prefer defection.
Players with intermediate in-degree (blue dots) show
different behaviors for the {\it T-driven} and {\it P-driven} games.
Specifically, these players split into two subgroups
showing different behaviors (some full cooperators and some full defectors, see Figure \ref{fig:deltaCoop}{\bf A}).
Figure \ref{fig:deltaCoop}{\bf B} shows that these type of players reach a partial level of cooperation.
The distinction of the three groups is highlighted by the dashed vertical black lines. \\

In order to quantify
the difference between the level of cooperation
of player $v$ and the average cooperation of his neighbors at steady state,
the following quantitative indicator is introduced:
\begin{equation}\label{eq:cv}
c_{v} = x_{v}^* - \frac{1}{k_{v}}\sum_{w=1}^{N}a_{v,w}x_{w}^*.
\end{equation}
If $c_{v}>0$, then player $v$ is altruistic, since
his level of cooperation is higher than the average of his neighbors, while $c_{v} < 0$
indicates more selfish behaviors. 
The group of non central players is always more altruistic than 
hubs. 
The intermediate players are again splitted into altruist and selfish
for {\it T-driven} game (Figure \ref{fig:deltaCoop}{\bf C}),
while for {\it P-driven} game (Figure \ref{fig:deltaCoop}{\bf D}), this distinction
vanishes, and a continuous distribution of altruist and selfish persons is observed.\\

Joining the results of Figures \ref{fig:flux}, \ref{fig:avgPerc} and \ref{fig:deltaCoop}, we conclude that
some individuals are more sensitive and aware on
their internal mechanisms, thus becoming cooperative for lower 
self-regulating factors, and exhibiting a more altruistic behavior.
In particular, for the {\it P-driven} game, these receptive individuals
catalyze the others to cooperate. 
%
%

\begin{appendices}
\section{The Evolutionary Game equation on Networks (EGN) and self games}\label{sec:EGNmodel_app}

Let $\playerSet = \{1, 2, \ldots, N\}$ be the set of players. 
Each player is placed in a vertex of a directed graph, defined by the adjacency matrix 
${\bf A} = \{a_{v,w}\} \in \{0, 1\}^{N \times N}$ with ${(v,w) \in \playerSet^{2}}$.
Specifically, $a_{v,w} = 1$ when $v$ is connected with $w$, $0$ otherwise.
It is also assumed that $a_{v,v} = 0$.
The degree of player $v$ is defined as the cardinality of his neighborhood, namely:

$$k_{v} = \sum_{w=1}^{N}a_{v,w}.$$\\

In the literature on evolutionary game theory, it is assumed that, at each time, one individual
uses a pure strategy in a given set while playing games with connected individuals.
For the specific topic of cooperation, these games are often modeled as Prisoner's dilemma games:
the set of pure strategies contains 
only two elements, cooperation ($C$) and defection ($D$), and the outcome is 
described by the following payoff matrix:

$$
{\bf B} = \begin{bmatrix}
	R & S\\
	T & P
\end{bmatrix},
$$

%
where $R$ is the reward when both players cooperate, $T$ is the temptation
to defect when the opponent cooperates, $S$ is the sucker's payoff earned by
a cooperative player when the opponent defects, and $P$ is the punishment
for mutual defection. More specifically, for a Prisoner's dilemma game,
the reward is a better outcome than the punishment ($R > P$),
the temptation payoff is higher than the reward ($T > R$),
and the punishment is preferred to the sucker's payoff ($P > S$).
It is useful to define $\sigma_{C} = R-T<0$ and $\sigma_{D} = P-S > 0$ \cite{WEIBULL1995},
allowing us to distinguish two cases: when the effect of $\sigma_{C}$ is stronger than $\sigma_{D}$ ($|\sigma_{C}| > \sigma_{D}$) 
we will refer to a {\it T-driven} game,
while the opposite situation ($|\sigma_{C}| < \sigma_{D}$) is hereafter called {\it P-driven} game.
\\

In the present work, a more realistic scenario is investigated, where one player can 
choose his own level of cooperation, instead of just $C$ or $D$. 
The level of cooperation of a generic player $v$, 
is denoted by the real number $x_{v} \in [0, 1]$; specifically, $x_{v}=1$ stands
for a player exhibiting maximum cooperativeness, while $x_{v} = 0$ represents a full defector. All other
shades (i.e. $x_{v} \in (0,1)$) denote partial levels of cooperation.
In this new framework, when any two connected players $v$ and $w$ take part in a game,
the payoff for $v$ is defined by the continuous bilinear function $\phi : [0, 1] \times [0, 1] \to \mathbb{R}$: 
\begin{eqnarray}	
\phi(x_{v}, x_{w}) & = & [x_{v}, 1-x_{v}]{\bf B}\begin{bmatrix}
x_{w} \\
1-x_{w}
\end{bmatrix} \nonumber \\
& = & (R - T + P - S)x_v x_w - (P-S)x_{v} - (P-T)x_{w} + P. \nonumber
\end{eqnarray}

Moreover, notice that $x_v, x_w \in \{0,1\}$, we recover the payoffs $R$, $T$, $S$ and $P$.\\

The total payoff of player $v$ is the sum of all outcomes of two-player games with neighbors.
Formally, the payoff function $\phi_{v} : [0, 1]^{N} \to \mathbb{R}$ is defined as follows:
\begin{equation}\nonumber 
	\phi_{v}({\bf x}) = \sum_{w = 1}^{N} a_{v,w}\phi(x_{v}, x_{w}),
\end{equation}
where ${\bf x}$ is the vector of all the $x_{v}$ variables.
Moreover, given the vector ${\bf x}$, we define the following payoff of pure strategies $C$ ($x_{v} = 1$) and $D$ ($x_{v} = 0$):
\begin{equation}\nonumber 
\begin{cases}
	p_{v}^{C}({\bf x}) = \displaystyle\sum_{w = 1}^{N} a_{v,w}\phi(1, x_{w}) = \displaystyle\sum_{w = 1}^{N} a_{v,w}((R-S)x_{w} + S)\\
	p_{v}^{D}({\bf x}) = \displaystyle\sum_{w = 1}^{N} a_{v,w}\phi(0, x_{w}) = \displaystyle\sum_{w = 1}^{N} a_{v,w}((T-P)x_{w} + P)\\
\end{cases}.
\end{equation}
Following \cite{MM2015, IMM2016}, the EGN equation for two-strategy games reads as follows:

\begin{equation}\label{eqn:EGNold}
\dot{x}_{v} = x_{v}(1-x_{v})\Delta p_{v}({\bf x}),
\end{equation}

where

\begin{eqnarray}
	\Delta p_{v}({\bf x}) = p_{v}^{C}({\bf x})-p_{v}^{D}({\bf x}) & = & \displaystyle\sum_{w = 1}^{N} a_{v,w}(R-T+P-S)x_{w} - (P-S)) \nonumber\\
	& = & \displaystyle\sum_{w = 1}^{N} a_{v,w}((\sigma_{C}+\sigma_{D})x_{w} - \sigma_{D}). \nonumber
\end{eqnarray} 

It is clear that the level of cooperation of player $v$ increases (decreases)
when $\Delta p_{v}({\bf x})$ is positive (negative).
In other words, the player $v$ will be more cooperative over time as 
long as the payoff he can earn using the pure strategy $C$ is better
than the payoff he can earn using the pure strategy $D$.\\

This comparative evaluation of the benefits provided by the available strategies
can be represented in an alternative way.
Specifically, suppose that player $v$ is able to appraise whether a change of his strategy $x_{v}$
produces an improvement of his payoff $\phi_{v}$. 
This means that, if the derivative
of $\phi_{v}$ with respect to $x_{v}$ is positive (negative), 
the player would like to increase (decrease) his level of cooperation.
Interestingly, 
the following result holds: 

\begin{eqnarray}\nonumber 
\frac{\partial \phi_{v}({\bf x})}{\partial x_{v}} & = & \sum_{w=1}^{N}a_{v,w}\frac{\partial \phi(x_{v},x_{w})}{\partial x_{v}} \nonumber \\
& = & \sum_{w=1}^{N}a_{v,w}\left((R-T+P-S)x_{w} - (P-S)\right), \nonumber \\
& = & \sum_{w=1}^{N}a_{v,w}\left((\sigma_{C}+\sigma_{D})x_{w} - \sigma_{D}\right), \nonumber \\
& = & \Delta p_{v}({\bf x}). \nonumber
\end{eqnarray}

Thus, the EGN equation \eqref{eqn:EGNold} can be rewritten as follows:

\begin{equation}\label{eqn:EGN}
\dot{x}_{v} = x_{v}(1-x_{v})\frac{\partial \phi_{v}({\bf x})}{\partial x_{v}}.
\end{equation}

It is worthwhile to notice that, while the replicator equation is used to describe the dynamics of population
where strategies correspond to the phenotypes of the individuals,
its extension on graphs, the EGN equation, is suitable for analyzing 
the dynamics of individuals arranged on a network, which are able to choose their strategies 
in the continuous set $[0, 1]$.\\

The EGN equation \eqref{eqn:EGN}, as well as most of the models presented in the literature, 
assumes that the strategy dynamics
of a generic player $v$ is driven only by external factors.
Indeed, $\frac{\partial \phi_{v}}{\partial x_{v}}$ depends only on the state of neighboring players,
not on the current state $x_v$ of player $v$ himself.
Inspired by mechanisms describing self-regulation in animal societies reported in \cite{S1981},
we overcome this issue by introducing the Self-Regulated EGN equation (SR-EGN);
this new model is obtained by adding a self-regulating term $f_{v}$
to the EGN equation, balancing the external feedback $\frac{\partial \phi_{v}}{\partial x_{v}}$, thus reading as follows:
\begin{equation}\label{eqn:EGN-self}
\dot{x}_{v} = x_{v}(1-x_{v})\left(\frac{\partial \phi_{v}({\bf x})}{\partial x_{v}} - \beta_{v} f_{v}(x_{v})\right),
\end{equation}
where the parameter $\beta_{v}$ is used to tune the effectiveness of the
introduced self-regulation mechanism.
Specifically, we assume that this self-regulation term embodies a game
that a given individual plays against himself.
To describe this self game, consider two generic players which strategies are $y$ and $z$, 
both belonging to the set $[0, 1]$.
As already mentioned, the first player can assess whether a change of his strategy $y$ can lead to an improvement
of the payoff $\phi(y,z)$. In particular, the assessment is based on the sign of the partial derivative
$$\frac{\partial \phi(y,z)}{\partial y} = (\sigma_{C} + \sigma_{D})z-\sigma_{D}.$$
In the particular case of individuals representing both the first and second player at the same time,
the derivative reads as follows: 

\begin{equation}\nonumber
	\left.\frac{\partial \phi(y,z)}{\partial y}\right|_{y=z=x_{v}} = (\sigma_{C} + \sigma_{D})x_{v}-\sigma_{D}.
\end{equation}

Therefore, the self-regulating term is defined as:

\begin{equation}\nonumber
	f_{v}(x_{v}) = \left.\frac{\partial \phi(y,z)}{\partial y}\right|_{y=z=x_{v}} = (\sigma_{C} + \sigma_{D})x_{v}-\sigma_{D}.
\end{equation}

It is worthwhile to notice that the time derivative of $x_{v}$ in \eqref{eqn:EGN-self}
depends on $x_v$ in the term accounting for the self game.
Thus, the self game introduces a \textbf{feedback} mechanism
regulated by the parameter $\beta_{v} \in \mathbb{R}$.
In particular, in equation \eqref{eqn:EGN-self}, $\beta_{v} > 0$ represents a negative feedback, $\beta_{v} < 0$ stands for a positive 
feedback, while $\beta_{v} = 0$ refers to situations where the player $v$ does not play a self game.




\subsection{Steady states and linearization}\label{sec:ss_lin}

A steady state $\bf{x^*}$ is a solution of equation \eqref{eqn:EGN-self} satisfying  $\dot{x}_{v} = 0 ~\forall v \in \playerSet$. 
In order to be \textbf{feasible}, the components of a steady state must belong to the set $[0, 1]$.
Formally, the set of feasible steady states is:
$$\Theta = \{{\bf x}^{*} \in \mathbb{R}^{N} : \dot{x}_{v}^{*} = 0 \wedge x_{v}^{*} \in [0,1] ~ \forall v \in \playerSet \}.$$
It is clear that all points such that for all $v$, $x_{v}^{*} = 0$ or $x_{v}^{*} = 1$
are in the set $\Theta$. They are the $2^{N}$ pure steady states. 
We remark that set $\Theta$ may contain also other steady states, exhibiting fuzzy levels of cooperation. 
Particularly relevant are the pure steady states 

$${\bf x}^{*}_{ALLC} = [1, 1, \ldots, 1]^{\top},$$

and

$${\bf x}^{*}_{ALLD} = [0, 0, \ldots, 0]^{\top}.$$

Indeed, they represent a population composed by full cooperators and full defectors, respectively,
and thus they describe the spread of cooperation and its extinction in a given population. 
\\

The dynamical properties of these two pure steady states is fundamental for the emergence of cooperation. 
In particular, their stability can be analyzed by linearizing system \eqref{eqn:EGN-self}
near them.

The Jacobian matrix of system \eqref{eqn:EGN}, ${\bf J}({\bf x}) = \{j_{v,w}({\bf x})\}$, is defined as follows:
\begin{equation}\nonumber 
j_{v,w}({\bf x}) = \frac{\partial \dot{x}_{v}}{\partial x_{w}} = \begin{cases}
x_v(1-x_v)(\sigma_{C} + \sigma_{D}), & \text{if}~a_{v,w} = 1 \\
~\\
(1-2x_{v})\displaystyle\left(\frac{\partial \phi_{v}({\bf x})}{\partial x_{v}} - \beta_{v} f_{v}(x_{v})\right)-\beta_{v}x_{v}(1-x_{v})(\sigma_{C} + \sigma_{D})  , & \text{if}~w = v \\
~\\
0, & \text{otherwise}
\end{cases}.
\end{equation}

It is easy to show that the Jacobian matrix reduces to a diagonal one for both ${\bf x}^{*}_{ALLC}$ and ${\bf x}^{*}_{ALLD}$.
Moreover, observe that:

\begin{eqnarray}
	\frac{\partial \phi_{v}({\bf x}^{*}_{ALLC})}{\partial x_{v}} & = & \sum_{w=1}^{N}a_{v,w}((\sigma_{C}+\sigma_{D}) \cdot 1 - \sigma_{D}) = \sum_{w=1}^{N}a_{v,w}\sigma_{C} = k_{v} \sigma_{C}, \nonumber\\
	\frac{\partial \phi_{v}({\bf x}^{*}_{ALLD})}{\partial x_{v}} & = & \sum_{w=1}^{N}a_{v,w}((\sigma_{C}+\sigma_{D}) \cdot 0 - \sigma_{D}) = -\sum_{w=1}^{N}a_{v,w}\sigma_{D} = -k_{v} \sigma_{D}, \nonumber\\
	f_{v}(1) & = & (\sigma_{C}+\sigma_{D} \cdot 1 - \sigma_{D}) = \sigma_{C}, \nonumber\\
	f_{v}(0) & = & (\sigma_{C}+\sigma_{D} \cdot 0 - \sigma_{D}) = -\sigma_{D}. \nonumber
\end{eqnarray}
Therefore, we have that:
$$j_{v,v}({\bf x}^{*}_{ALLC}) = (1-2 \cdot 1)\displaystyle\left(\frac{\partial \phi_{v}({\bf x}^{*}_{ALLC})}{\partial x_{v}} - \beta_{v} f_{v}(1)\right) = -\sigma_{C}(k_{v} - \beta_{v})$${}
for ${\bf x}^{*}_{ALLC}$, and
$$j_{v,v}({\bf x}^{*}_{ALLD}) = (1-2 \cdot 0)\displaystyle\left(\frac{\partial \phi_{v}({\bf x}^{*}_{ALLD})}{\partial x_{v}} - \beta_{v} f_{v}(0)\right) = -\sigma_{D}(k_{v} - \beta_{v})$${}
for ${\bf x}^{*}_{ALLD}$.

\section{Emergence of cooperation in the EGN equation with self-regulations}\label{sec:survival}

The emergence of cooperation
is reached when all the members of a social network turn their strategies
to cooperation.
Therefore, the asymptotic stability of ${\bf x}^{*}_{ALLC}$, as well
as the instability of ${\bf x}^{*}_{ALLD}$, has a fundamental role in this context.
In order to study the stability of steady states ${\bf x}^{*}_{ALLC}$
and ${\bf x}^{*}_{ALLD}$, we start by analyzing their
linear stability. Moreover, an appropriate Lyapunov function is proposed, 
which prove that ${\bf x}^{*}_{ALLC}$ is also globally asymptotically stable,
thus guarantying the emergence of cooperation.

\subsection{Asymptotic stability of ${\bf x}^{*}_{ALLC}$}
Recall that the spectrum of ${\bf J}({\bf x}^{*})$ characterizes the linear 
stability of any steady state ${\bf x}^{*}$ \cite{Strogatz2001}. Therefore,
the role of the eigenvalues of the Jacobian matrix ${\bf J}({\bf x}^{*})$ 
is fundamental to tackle the problem of the emergence of cooperation.\\

The following results hold.

\begin{theorem}\label{th1}
If $\beta_{v} > k_{v} ~\forall v \in \playerSet$,
then ${\bf x}^{*}_{ALLC}$ is asymptotically stable.
\end{theorem}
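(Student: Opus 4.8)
The plan is to exploit the linearization already carried out in Section~\ref{sec:ss_lin}. The excerpt establishes that at $\mathbf{x}^*_{ALLC}$ the Jacobian $\mathbf{J}(\mathbf{x}^*_{ALLC})$ reduces to a diagonal matrix, so its eigenvalues are precisely its diagonal entries. Those entries have already been computed to be $j_{v,v}(\mathbf{x}^*_{ALLC}) = -\sigma_C(k_v - \beta_v)$. The entire argument therefore collapses to a sign check on these $N$ numbers, and no further matrix analysis is needed.

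First I would invoke the standard linearization criterion (for instance \cite{Strogatz2001}): a hyperbolic steady state is asymptotically stable whenever all eigenvalues of the Jacobian have strictly negative real part. Since the Jacobian is diagonal here, each eigenvalue is real and equals $\lambda_v = -\sigma_C(k_v - \beta_v)$ for some $v \in \playerSet$, so it suffices to verify $\lambda_v < 0$ for every $v$.

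Next I would use the sign convention $\sigma_C < 0$, which holds throughout for the prisoner's dilemma, so that $-\sigma_C > 0$. Then $\lambda_v = -\sigma_C(k_v - \beta_v)$ is the product of the positive factor $-\sigma_C$ with $(k_v - \beta_v)$; hence $\lambda_v < 0$ if and only if $k_v - \beta_v < 0$, that is $\beta_v > k_v$. Under the hypothesis $\beta_v > k_v$ for all $v \in \playerSet$, every eigenvalue is strictly negative, and asymptotic stability of $\mathbf{x}^*_{ALLC}$ follows at once.

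I do not expect any real obstacle: the heavy lifting, namely the diagonality of the Jacobian at $\mathbf{x}^*_{ALLC}$ and the explicit evaluation of its diagonal entries, is already completed in the preceding subsection. The only point demanding care is the sign bookkeeping, since the fact that $\sigma_C < 0$ is exactly what reverses the inequality and makes $\beta_v > k_v$ (rather than $\beta_v < k_v$) the stabilizing condition; tracking this correctly is the one place where a slip would invert the conclusion.
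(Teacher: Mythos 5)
Your proposal is correct and follows essentially the same route as the paper's own proof: it uses the diagonality of the Jacobian at ${\bf x}^{*}_{ALLC}$ established in Section~\ref{sec:ss_lin}, reads off the eigenvalues $\lambda_{v} = -\sigma_{C}(k_{v}-\beta_{v})$, and concludes negativity from $\sigma_{C}<0$ and $\beta_{v}>k_{v}$. The sign bookkeeping you flag as the only delicate point is exactly the step the paper relies on as well.
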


\begin{proof}
As shown before, the Jacobian matrix evaluated for ${\bf x}^{*}_{ALLC}$ is diagonal. Then,
the elements on the diagonal of the Jacobian matrix correspond to its eigenvalues and they are defined as follows:
\begin{equation}\label{eqn:allCeigenvalues}
j_{v,v}({\bf x}^{*}_{ALLC}) = \lambda_{v} = -\sigma_{C}(k_{v} - \beta_{v}). \nonumber
\end{equation}

Using the fact that $\beta_{v} > k_{v} ~\forall v \in \playerSet$, and $\sigma_C < 0$, all the eigenvalues are negative. Thus,
${\bf x}^{*}_{ALLC}$ is asymptotically stable.
\end{proof}
%

%
\begin{theorem}\label{th2}
If $\exists v \in \playerSet : \beta_{v} > k_{v}$, 
then ${\bf x}^{*}_{ALLD}$ is unstable.
\end{theorem}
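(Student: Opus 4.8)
The plan is to mirror the proof of Theorem \ref{th1}, exploiting the fact — established in the linearization subsection — that the Jacobian matrix ${\bf J}({\bf x}^{*}_{ALLD})$ is diagonal, so that its diagonal entries are precisely its eigenvalues. First I would recall the expression already computed there,
$$j_{v,v}({\bf x}^{*}_{ALLD}) = \lambda_{v} = -\sigma_{D}(k_{v} - \beta_{v}),$$
which gives the complete spectrum of the Jacobian at the all-defector state with no further calculation needed.

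Next I would examine the sign of a single eigenvalue. By hypothesis there exists at least one index $v \in \playerSet$ with $\beta_{v} > k_{v}$, i.e. $k_{v} - \beta_{v} < 0$. Since $\sigma_{D} > 0$ for the prisoner's dilemma, the corresponding eigenvalue satisfies
$$\lambda_{v} = -\sigma_{D}(k_{v} - \beta_{v}) = \sigma_{D}(\beta_{v} - k_{v}) > 0.$$
Thus the Jacobian at ${\bf x}^{*}_{ALLD}$ possesses at least one strictly positive eigenvalue.

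Finally, I would invoke the standard linear stability criterion: whenever the linearization at a hyperbolic steady state has an eigenvalue with strictly positive real part, the steady state is unstable for the nonlinear system as well (Hartman--Grobman / linearized instability, as referenced for Theorem \ref{th1}). A single positive eigenvalue suffices to conclude instability of ${\bf x}^{*}_{ALLD}$.

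I expect no serious obstacle here, since the computation is immediate from the previously tabulated $j_{v,v}$; the only point deserving care is the logical direction of the linear-stability argument, and emphasizing it is the substantive remark. For the asymptotic stability of ${\bf x}^{*}_{ALLC}$ in Theorem \ref{th1} one needs \emph{every} eigenvalue negative, which forces the global hypothesis $\beta_{v} > k_{v}$ for all $v$; by contrast, for instability \emph{one} positive eigenvalue is enough, so the weaker existential hypothesis $\exists v : \beta_{v} > k_{v}$ already destabilizes total defection. This asymmetry between the two hypotheses is exactly what makes it easy to dislodge the population from full defection while fully cooperative consensus requires a uniform condition.
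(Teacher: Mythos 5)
Your proposal is correct and follows exactly the paper's own argument: both use the fact that ${\bf J}({\bf x}^{*}_{ALLD})$ is diagonal with entries $\lambda_{v} = -\sigma_{D}(k_{v}-\beta_{v})$, note that $\sigma_{D}>0$ and the existential hypothesis yield at least one positive eigenvalue, and conclude instability by the standard linearization criterion. Your closing remark on the asymmetry between the universal hypothesis of Theorem \ref{th1} and the existential one here is a nice observation, though it goes beyond what the paper states.
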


\begin{proof}
The eigenvalues of the Jacobian matrix relative to the steady state ${\bf x}^{*}_{ALLD}$ are
\begin{equation}\nonumber 
j_{v,v}({\bf x}^{*}_{ALLD}) = \lambda_{v} = -\sigma_{D}(k_{v} - \beta_{v}).
\end{equation}
If the hypothesis of the theorem are fulfilled, since $\sigma_D>0$, then there is at least one
positive eigenvalue, implying that ${\bf x}^{*}_{ALLD}$ is an unstable steady state.
\end{proof}
%

These results are summarized as follows: defection dominates over cooperation. 
Then, if the system does not present any internal feedback mechanism (i.e. $\beta_{v} = 0 ~\forall v \in \playerSet$), 
the whole social network will converge to ${\bf x^{*}}_{ALLD}$ (cooperation vanishes).
Anyway, using $\beta_{v} > k_{v}$ for all the members of the population, ${\bf x^{*}}_{ALLD}$
is destabilized and ${\bf x^{*}}_{ALLC}$ becomes attractive.
\subsection{Global asymptotic stability of ${\bf x}^{*}_{ALLC}$}

Theorems \ref{th1} and \ref{th2} prove that under suitable condition, ${\bf x}^{*}_{ALLC}$ is asymptotically stable
and ${\bf x}^{*}_{ALLD}$ is unstable. Anyway, this is not sufficient to prove
emergence of cooperation. Indeed, there can be some other steady states in $\Theta$
which may be also attractive. 
Nevertheless, a Lyapunov function for the steady state ${\bf x}^{*}_{ALLC}$
on the set ${\bf x} \in (0,1]^{N}$ can be found \cite{KA2002}.\\

Adapting the approach presented in \cite{WEIBULL1995,HS1998} to the SR-EGN equation, we consider the following function:
\begin{equation}\nonumber 
V({\bf x}) = -\sum_{v=1}^{N}\log(x_{v}),
\end{equation}
for ${\bf x} \in (0,1]^{N}$. Notice that $V({\bf x}^{*}_{ALLC}) = 0$,
and $V({\bf x}) > 0 ~\forall {\bf x} \neq {\bf x}^{*}_{ALLC}$.\\

Moreover, the time derivative of $V({\bf x})$ is defined as follows:
\begin{eqnarray}
\dot{V}({\bf x}) = \frac{\partial V({\bf x})}{\partial t} & = & \sum_{v=1}^{N}\frac{\partial V({\bf x})}{\partial x_{v}}\dot{x}_{v} = \nonumber\\	
& = & -\sum_{v=1}^{N}\frac{1}{x_{v}}x_{v}(1-x_{v})\left(\frac{\partial \phi_{v}({\bf x})}{\partial x_{v}} - \beta_{v} f_{v}(x_{v})\right) = \nonumber\\	
& = & \sum_{v=1}^{N}(x_{v} - 1)\left(\frac{\partial \phi_{v}({\bf x})}{\partial x_{v}} - \beta_{v} f_{v}(x_{v})\right). \label{eq:lyapALLCtder}	
\end{eqnarray}
Finally, it is clear that $\dot{V}({\bf x}_{ALLC}^{*}) = 0$. \\

Starting from these premises, if $\dot{V}({\bf x}) < 0$ for all ${\bf x} \in (0, 1]^{N} \setminus \{{\bf x}^{*}_{ALLC}\}$, then $V({\bf x})$ is a Lyapunov function.\\

Let's introduce the following quantities:

\begin{equation}\label{eqn:minPhi}
	\psi = \inf_{y \in (0, 1]}(\sigma_{C}+\sigma_{D})y - \sigma_{D},
\end{equation}

\begin{equation}\label{eqn:maxPhi}
	\xi = \sup_{y \in (0, 1]}(\sigma_{C}+\sigma_{D})y - \sigma_{D},
\end{equation}
and
$$\rho = \frac{\psi}{\xi}.$$
Notice that, since $\sigma_{C} < 0$ and $\sigma_{D} > 0$, then both $\psi$ and $\xi$ are negative. 
Indeed,
\begin{itemize}
\item for a {\it T-driven} game, since $|\sigma_{C}| > \sigma_{D}$, then $\psi = \sigma_{C}$ and $\xi = -\sigma_{D}$;
\item for a {\it P-driven} game, since $|\sigma_{C}| < \sigma_{D}$, then $\psi = -\sigma_{D}$ and $\xi = \sigma_{C}$.
\end{itemize}

Then, we conclude that $\rho$ is a positive number if both cases, and, in particular, 
it is:

$$\rho = \frac{\psi}{\xi} = \begin{cases}
\displaystyle\left|\frac{\sigma_{C}}{\sigma_{D}}\right| & \text{for {\it T-driven} games} \\
\\
\displaystyle\left|\frac{\sigma_{D}}{\sigma_{C}}\right| & \text{for {\it P-driven} games}.
\end{cases}$$


The following result holds.

\begin{theorem}\label{th3}
If $\beta_{v} > \rho k_{v} ~\forall v \in \playerSet$, then
$V({\bf x})$ is a Lyapunov function.
\end{theorem}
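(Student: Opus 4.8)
The plan is to verify the sufficient condition stated just above the theorem: that $\dot V(\mathbf{x}) < 0$ on $(0,1]^N \setminus \{\mathbf{x}^{*}_{ALLC}\}$. Starting from the expression \eqref{eq:lyapALLCtder}, I would write $\dot V(\mathbf{x}) = \sum_{v=1}^N (x_v-1)\,g_v(\mathbf{x})$, where $g_v(\mathbf{x}) = \partial\phi_v/\partial x_v - \beta_v f_v(x_v)$ is the bracketed factor. Because $x_v \in (0,1]$, the prefactor $x_v - 1$ is $\le 0$ and vanishes precisely when $x_v = 1$. Thus the whole sum is $\le 0$, and strictly negative away from $\mathbf{x}^{*}_{ALLC}$, as soon as I establish that $g_v(\mathbf{x}) > 0$ for every $v$ and every state in the domain. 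This reduces the theorem to a single uniform sign estimate on $g_v$.

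To obtain that estimate I would abbreviate $h(y) = (\sigma_C+\sigma_D)y - \sigma_D$, so that $\partial\phi_v/\partial x_v = k_v\,h(\bar x_v)$ and $f_v(x_v) = h(x_v)$, whence $g_v = k_v\,h(\bar x_v) - \beta_v\,h(x_v)$. The \emph{key observation} is that on $(0,1]^N$ every coordinate $x_w$ lies in $(0,1]$, so the neighbor average $\bar x_v = (1/k_v)\sum_w a_{v,w}x_w$ is a convex combination of numbers in $(0,1]$ and therefore itself lies in $(0,1]$. Consequently both $h(\bar x_v)$ and $h(x_v)$ belong to $[\psi,\xi]$, the range defined in \eqref{eqn:minPhi}--\eqref{eqn:maxPhi}, where $\psi \le \xi < 0$. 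Applying $h(\bar x_v) \ge \psi$ together with $k_v \ge 0$, and $h(x_v) \le \xi$ together with $\beta_v > 0$, yields the state-independent lower bound $g_v \ge k_v\psi - \beta_v\xi$.

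It then remains to convert the hypothesis \eqref{eqn:theorem} into positivity of this bound. Since $\xi < 0$, dividing $\beta_v\xi < k_v\psi$ through by $\xi$ reverses the inequality to $\beta_v > k_v(\psi/\xi) = \rho k_v$, which is exactly the assumption. Hence $g_v \ge k_v\psi - \beta_v\xi > 0$ for all $v$, so each summand $(x_v-1)g_v$ is nonpositive and is strictly negative whenever $x_v \neq 1$; summing gives $\dot V(\mathbf{x}) < 0$ for all $\mathbf{x} \neq \mathbf{x}^{*}_{ALLC}$ in $(0,1]^N$, as required.

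The step I expect to require the most care is the uniform bound on $g_v$: the term $h(\bar x_v)$ couples player $v$ to the entire configuration, and the argument works only because replacing $h(\bar x_v)$ and $h(x_v)$ by their extreme values $\psi$ and $\xi$ decouples the estimate from the rest of the state. This is exactly why the invariance of the range $(0,1]$ for both $x_v$ and $\bar x_v$ is essential. I would also flag the sign reversal incurred when dividing by the negative quantity $\xi$, and check the degenerate case $k_v = 0$: there $g_v = -\beta_v h(x_v) > 0$ since $\beta_v > \rho k_v = 0$ forces $\beta_v > 0$, which is consistent with the general bound.
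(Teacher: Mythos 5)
Your proof is correct and follows essentially the same route as the paper's: bound the bracket $g_v = \partial\phi_v/\partial x_v - \beta_v f_v(x_v)$ from below by $k_v\psi - \beta_v\xi$ using the extremes $\psi,\xi$ of the affine payoff term, then convert the hypothesis $\beta_v > \rho k_v$ into positivity of that bound via the sign reversal incurred by dividing by $\xi<0$. If anything, your treatment is slightly more careful than the paper's wording: you correctly note that some coordinates may equal $1$ (so the summands are only nonpositive, with at least one strictly negative), and you check the degenerate case $k_v=0$, whereas the paper simply asserts $x_v-1<0$ on the punctured domain.
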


\begin{proof}
It is straightforward to see that:

\begin{equation}\label{eqn:cond1}
	\frac{\partial \phi_{v}({\bf x})}{\partial x_{v}} = \sum_{w=1}^{N}a_{v,w}((\sigma_{C}+\sigma_{D})x_{w} - \sigma_{D}) \geq \sum_{w=1}^{N}a_{v,w} \psi = k_{v} \psi.
\end{equation}
Similarly, since $\beta_{v} > 0$, we get that:

\begin{equation}\label{eqn:cond2}
	\beta_{v} f_{v}(x_{v}) = \beta_{v}((\sigma_{C}+\sigma_{D})x_{v} - \sigma_{D}) \leq \beta_{v} \xi.
\end{equation}

Joining \eqref{eqn:cond1} and \eqref{eqn:cond2} together, we get that:

\begin{equation}\nonumber 
	\frac{\partial \phi_{v}({\bf x})}{\partial x_{v}} - \beta_{v} f_{v}(x_{v}) \geq k_{v} \psi- \beta_{v} \xi.
\end{equation}

Moreover, notice that:

$$\beta_{v} > \rho k_{v} \Rightarrow \beta_{v} > \frac{\psi}{\xi}k_{v} \Rightarrow k_{v} \psi- \beta_{v} \xi > 0,$$

and hence

\begin{equation}\label{eqn:cond4}
	\frac{\partial \phi_{v}({\bf x})}{\partial x_{v}} - \beta_{v} f_{v}(x_{v}) > 0 ~\forall v \in \playerSet.
\end{equation}

According to equations \eqref{eq:lyapALLCtder} and \eqref{eqn:cond4}, since $x_{v}-1 < 0$ for all ${\bf x} \in (0, 1]^{N} \setminus \{{\bf x}^{*}_{ALLC}\}$,
we guarantee that $\dot{V}({\bf x}) < 0$ for all ${\bf x} \in (0, 1]^{N} \setminus \{{\bf x}^{*}_{ALLC}\}$. Hence, $V({\bf x})$ is a Lyapunov function.
\end{proof}

\begin{figure}[!ht] 
\begin{center}
\includegraphics[width=\columnwidth]{./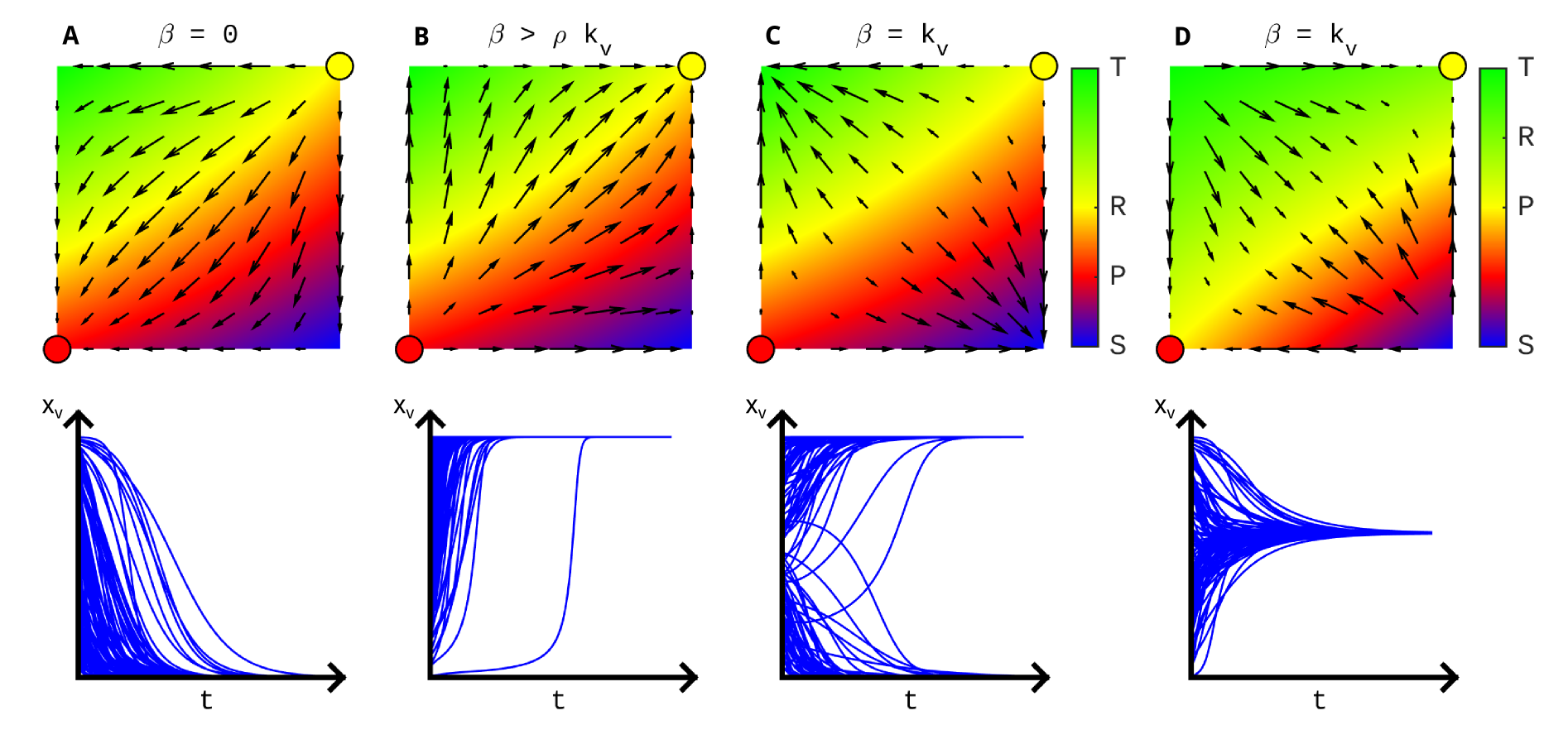}
\caption{\footnotesize{\textbf{Payoff, flow and dynamics.} 
Value of the two-player game payoff $\phi(x_v, x_w)$,  
and corresponding flow of the SR-EGN equation in the phase space $(x_{v}, x_{w})$
for different values of
$\beta_{v}$. 
Subplots {\bf A}, {\bf B} and {\bf C}:
{\it T-driven} game ($\sigma_{C} = -2$ and $\sigma_{D} = 1$);
subplot {\bf D}: {\it P-driven} game ($\sigma_{C} = -1$ and $\sigma_{D} = 2$).
Red and yellow circles represent the steady states ${\bf x}_{ALLD}^{*}$
and ${\bf x}_{ALLC}^{*}$, respectively.
For each case, the corresponding dynamics for a population of $N=100$ players 
arranged on a scale-free network with average degree $\bar{k} = 10$ is depicted below, 
where initial conditions are randomly chosen in the set $(0,1)$.
Results for $\beta_{v} = 0$ and $\beta_{v} > \rho k_{v}$ in the {\it T-driven}
game are similar to the {\it P-driven} game, thus
not reported here.}}\label{fig:flux2}
\end{center}
\end{figure}
Figure \ref{fig:flux2} compares the above theoretical results 
to the standard case ($\beta_{v} = 0$),
by showing the flow
in phase space of EGN and SR-EGN equations,
and the corresponding time course of
the cooperation level
for a population of $N=100$ players organized on a scale-free network.
Figures \ref{fig:flux2}{\bf A} and \ref{fig:flux2}{\bf B} show that all players are attracted by full defection (EGN) and full cooperation (SR-EGN), respectively.
The same scheme has been used to investigate the marginal case $\beta_v = k_v$ in the {\it T-driven} (Figure \ref{fig:flux2}{\bf C}) 
and {\it P-driven} (Figure \ref{fig:flux2}{\bf D}) games.
In the first case, some players are attracted by full cooperation and 
some others by full defection, while in the second we observe the presence
of an attracting line of partially cooperative steady states, where all players share
the same level of cooperation.
In this case, the different level reached depends on the initial conditions.\\

%

\end{appendices}

\section*{Acknowledgments}
CM was partially supported by grant 313773/2013-0 of the Science without Borders Program of CNPq/Brazil.

\end{document}